\newcommand{\NN}{\mathbb{N}}
\newcommand{\RR}{\mathbb{R}}
\newcommand{\R}{\mathbb{R}}
\newcommand{\vertL}{\:\: | \:\:}
\newcommand{\dia}{\textnormal{dia}}
\newcommand{\ST}{\mathcal{S}}
\newcommand{\RC}{\mathcal{R}}
\newcommand{\RO}{\mathcal{O}}
\newcommand{\SK}{\Sigma}
\begin{document}
\title{Skeletonisation Scale-Spaces}
\titlerunning{Skeletonisation Scale-Spaces}
%

\author{Julia Gierke and Pascal Peter}

\authorrunning{J. Gierke and P. Peter}
%
\institute{Mathematical Image Analysis Group,
	Faculty of Mathematics and Computer Science,\\ Campus E1.7,
	Saarland University, 66041 Saarbr\"ucken, Germany.\\
	\{gierke,peter\}@mia.uni-saarland.de}
\maketitle              
\begin{abstract}

The medial axis transform is a well-known tool for shape recognition. Instead of the object contour, it equivalently describes a binary object in terms of a skeleton containing all centres of maximal inscribed discs. While this shape descriptor is useful for many applications, it is also sensitive to noise: Small boundary perturbations can result in large unwanted expansions of the skeleton. Pruning offers a remedy by removing unwanted skeleton parts.
In our contribution, we generalise this principle to skeleton sparsification: We show that subsequently removing parts of the skeleton simplifies the associated shape in a hierarchical manner that obeys scale-space properties. To this end, we provide both a continuous and discrete theory that incorporates architectural and simplification statements as well as invariances. We illustrate how our skeletonisation scale-spaces can be employed for practical applications with two proof-of-concept implementations for pruning and compression.

\keywords{skeletonisation \and medial axis \and  scale-space \and sparsification.}
\end{abstract}
\section{Introduction}

Classical scale-space theory is closely connected to shape analysis \cite{SP08}: In particular, morphological scale-spaces   \cite{AGLM93,BM92,CS96,KS96,ST93,BS94} consider the evolution of object boundaries under operations derived from dilation and erosion \cite{So99a}. However, the silhouette is not the only way to describe a shape. Blum~\cite{Bl67} has proposed the medial axis transform (MAT) as an equivalent shape descriptor motivated from the grass fire analogy: Propagating fronts originating from the object contour meet in the middle of the object, creating shocks. These locations define the skeleton of the shape, from which the full object can be reconstructed. Multi-scale ideas have also been explored for the MAT and related concepts with a primary focus on stability under noisy shape boundaries \cite{PSSDZ03,SBS16}.

Recently, C\'ardenas et al.~\cite{CPW19} have used image sparsification to define scale-spaces: They gradually remove image pixels and reconstruct the evolving image from this sparse data. This resembles skeleton pruning~\cite{SB98}, which removes unwanted skeleton parts arising for instance from boundary noise. 
Leveraging this similarity, we establish novel connections between scale-space and MAT theory. Our goal is to generalise the concept of skeleton pruning in a formal scale-space setting by translating the sparsification paradigm from images to the medial axis.

\textbf{Our Contributions.} To this end, we propose a scale-space framework for shape evolutions that arise from sequential sparsification of skeletons. Inspired by the work of C\'ardenas et al.~\cite{CPW19}, we describe the order in which skeleton points are removed in terms of a sparsification path. Based on a  small amount of requirements, we establish architectural, simplification, and invariance properties for our scale-spaces. With a compression and a pruning example, we demonstrate how the sparsification path can be adapted to different practical applications. In particular, by imposing task-specific requirements on the sparsification path, we achieve additional theoretical guarantees for each application. 

\textbf{Related Work.} Our scale-space theory relies on the axiomatic frameworks that were developed for classical scale-spaces based on partial differential equations (PDEs) \cite{AGLM93,Ii62,Li11,SchW98,We97}, in particular on the work of Alvarez et al.~\cite{AGLM93}. For a comprehensive overview we refer to Weickert~\cite{We97}.
Due to their focus on shape analysis, morphological scale-spaces~\cite{AGLM93,BM92,CS96,KS96,ST93,BS94} are conceptually closer to our approach. However, they analyse the evolution of the shape boundary, while we consider the medial axis to characterise shapes.

Deriving a scale-space from a shrinking skeleton is closely related to the sparsification scale-spaces of C\'ardenas et al.~\cite{CPW19}, who remove image pixels successively instead of skeleton points and reconstruct with inpainting. For the medial axis, this can be seen as a form of pruning~\cite{BLL07,ML12,Og94,SB98,SS16,TH02}. Ogniewicz~\cite{Og94} proposed a scale-space-inspired hierarchic pruning similar to our branch pruning in Section~\ref{sec:applications}. It is however practice-oriented, less general, and includes no systematic investigation of scale-space properties. 

Besides the concept of hierarchy in pruning, connections between MAT and scale-space theory are rare. The multi-scale medial axis of Pizer et al.~\cite{PBCFM94} consist of curves in scale-space representing middle points. Similarly, image cores \cite{PEFM98} generalise the MAT based on scale-space theory to gain invariances. Tsogkas and Kokkinos~\cite{TK12} have used multi-scale concepts in a MAT-inspired learning approach to compute symmetry axes. 

\textbf{Organisation.}
Section~\ref{sec:skeletonisation} provides the theoretical background required for our new scale-spaces in Section~\ref{sec:skelscale}, which we adapt to specific applications in Section~\ref{sec:applications}. We conclude with a discussion and outlook on future work in Section~\ref{sec:conclusion}.

\section{Review: Skeletonisation}
\label{sec:skeletonisation}

For our skeletonisation scale-spaces, we require definitions of objects, distances, and the medial axis transform (MAT) \cite{Bl67}. We consider a binary image as a  function $f: \Omega \rightarrow \{0,1\}$
that maps the image domain $\Omega \subset \R^2$ to the background value $0$ and the object value $1$. The object is defined by set $O=\{ \bm x \in \Omega \, | \, u(\bm x) = 1\}$ and $\partial O$ denotes its boundary. We assume that $O$ is bounded and closed, such that its boundary belongs to the set. Hence, $O$ is a compact set. Furthermore, let $d(\bm x, \bm y)$ denote the Euclidean distance for $\bm x, \bm y \in \Omega$.

\begin{definition}[Distance Map $D$]\label{def:dm}
	The \textit{distance map} denotes the minimal distance to the boundary $\partial O$ according to
	\begin{equation}
		D: O \rightarrow \RR, \quad \bm x \mapsto \min\limits_{\bm y \in \partial O} d(\bm x,\bm y)
	\end{equation}
\end{definition}

The distance map allows us to define $B_{D(\bm x)}(\bm x):=\{\bm y \in \Omega \,  | \, d(\bm x, \bm y) \leq D(\bm x) \}$ as the largest disc around $\bm x$ that is fully contained in $O$. With such \textit{inscribed discs} we can define the medial axis.

\begin{definition}[Skeleton $\SK$]\label{def:skel}
	The \textit{skeleton} or \textit{medial axis (MA)} $\SK$ is defined as the set of centres of inscribed discs of maximal radius:
	\begin{equation*}   
		\Sigma := \{ \bm x \in O \, \vertL \, \forall \bm y \in O : B_{D(\bm x)}(\bm x) \not\subset B_{D(\bm y)}(\bm y) \} \, .
	\end{equation*}
\end{definition}

In the definition, we call the disc around $\bm x$ \textit{maximal}, since it is not fully contained in any other larger disc $B_{D(\bm y)}(\bm y)$ around  a skeleton point $\bm y$. Equivalently, $\SK$ is the set of skeleton points with at least two equidistant boundary points. 

\begin{definition}[Skeleton, Reconstruction, and Object Transform]\label{def:transforms}
	The \textit{skeleton transform} $\ST(f)=\Sigma$ maps the image $f$ to its skeleton $\SK$. The \textit{reconstruction transform} $\RC(\Sigma)=f$ is defined via the object transform 
	\begin{equation}
	\RO (\SK) := \bigcup_{\bm x \in \SK}  B_{D(\bm x)}(\bm x) \, .
	\end{equation}
\end{definition}

Since $\RO(\SK)$ is a continuous mapping of the compact set $\SK$, all reconstructions are compact as well and $\RO(\SK) = O$. In addition to this continuous case, we also consider discrete images with resolution $n_x \times n_y$ written as vectors $\bm f \in \{0,1\}^{n}$ with the discrete image domain $\Omega_D=\{1,...,n\}$. With the Euclidean distance $d(i,j)$ between the pixel centres of $i,j \in \Omega_D$ and discrete discs, we transfer all other definitions from the continuous to the discrete setting. However, using discrete discs in Definition~\ref{def:transforms} can lead to $\RO(\SK)  \neq O$ and thus $\RC(\Sigma)\neq f$. We discuss the consequences of these observations for our scale-spaces in Section~\ref{sec:properties}.

\section{Skeletonisation Scale-Spaces}
\label{sec:skelscale}

Inspired by the work of  C\'ardenas et al.~\cite{CPW19}, we define skeletonisation scale-spaces as the family of images emerging from successively removing parts of an initial skeleton. The order of this pruning is defined by \textit{sparsification paths}.

\begin{definition}[Sparsification Path]\label{def:sparspath}	
Given a skeleton $\SK$, a \textit{sparsification path} $P = (P_1,\dots,P_{m})$ with $m \in \NN \setminus \{0\}$ is an ordered collection of non-empty sets that form a partition of $\SK$, i.e. for all $s \neq t$: $P_s \neq \emptyset$, $P_s \cap P_t = \emptyset$, and $\bigcup_{\ell=1}^{m}P_\ell = \SK$. 
\end{definition}

In Section~\ref{sec:applications}, we show that the sparsification path is a design tool for task-specific scale-spaces. Given the path, we can define the scale-space itself.

\begin{definition}[Skeletonisation Scale-Space] \label{def:skelspace}
	Consider the binary image $f: \Omega \rightarrow \{0,1\}$ with domain $\Omega$, skeleton $\SK = \ST(f)$, reconstruction $\RC(\SK)$ and sparsification path $P=(P_1,...,P_m)$ partitioning $\SK$, $m \in \NN \setminus \{0\}$.
	The \textit{skeletonisation scale-space} is the family $(u_\ell, \SK_\ell)_{\ell=0}^{m}$ of images $u_\ell$ and skeletons $\SK_\ell$ obeying
	\begin{enumerate}
		\item $\SK_0 := \SK = \ST(f)$,
		\item $\SK_\ell := \SK_0 \setminus \bigcup_{i=1}^{\ell} P_i$
	    \hspace{23mm} \textnormal{for} $\ell \in \{1,\dots,m\}$,
		\item $u_\ell := \RC(\SK_\ell)$ \textnormal{and} $O_\ell =\RO(\SK_\ell)$ \hspace{7mm}  \textnormal{for} $\ell \in \{0,\dots,m\}$.
	\end{enumerate}
\end{definition}

At every scale $\ell$, we obtain a skeleton-image-pair by successively removing $P_\ell$ from the skeleton according to the sparsification path and reconstruction $u_\ell$ from the pruned skeleton. We grant full flexibility regarding the choice of the sets $P_\ell$ and number $m$ of sparsification steps. It implies the $m+1$ discrete scales.

So far, our definition applies to space-continuous skeletons. Analogously, we define fully discrete skeletonisation scale-spaces: We replace the continuous image $f$ by its discrete counterpart $\bm f \in \R^n$ with $n = n_x \cdot n_y$ pixels and use the discrete definitions of skeletons and the associated transforms from Section~\ref{sec:skeletonisation}. Then, all other definitions for sparsification path and scale-space carry over directly. In Section~\ref{sec:properties}, we consider scale-space properties primarily for the continuous formulation. In case those properties do not apply to the discrete setting, we discuss these differences explicitly. 

\subsection{Generalised Scale-Space Properties}
\label{sec:properties}

We investigate skeletonisation scale-spaces in the sense of Definition~\ref{def:skelspace} from the perspective of scale-space research. Specifically, we show how properties established by Alvarez et al.~\cite{AGLM93} transfer to this new class of scale-space.

\medskip
\noindent
{\bf Property 1: (Original Skeleton as Initial State).}\\[1mm]
By Definition~\ref{def:skelspace}, at scale $\ell=0$, the skeleton $\SK_0 = \ST(f) = \SK$ is the skeleton $\SK$ of the original object $O$ contained in the image $f$. In the continuous setting, the image $u_0$ is thus the original image $f$ due to $u_0 = \RC(\ST(f)) =f$. For the discrete case, the reconstruction transform $\RC$ is only an approximation, not an inverse of $\ST$ and thus only $\SK_0 = \SK$ holds.

\medskip
\noindent
{\bf Property 2: (Causality).}\\[1mm]
It does not matter if the sparsification path $P = (P_1,\dots,P_m)$ is traversed from the initial scale to scale $\ell$ or from any intermediate scale: The resulting skeleton-image-pair at scale $\ell$ remains the same. According to Definition~\ref{def:skelspace}, for $\ell \in \{0,...,m\}$ and $k \in \{0,...,\ell-1\}$, we have
		\begin{equation}
			\SK_\ell = \SK_0 \setminus \bigcup_{i=1}^{\ell}P_i = \left(\SK_0 \setminus \bigcup_{i=1}^{k}P_i \right) \setminus \bigcup_{i=k+1}^{\ell}P_i =  \SK_{k} \setminus \bigcup_{i=k+1}^{\ell}P_i \, .
		\end{equation}
Thus, we can reach scale $\ell$ equivalently in $\ell$ steps from scale $0$ or in $\ell-k$ steps from scale $k$. Since the skeleton at scale $\ell$ remains the same, this also holds for the image $u_\ell = \ST(\SK_\ell)$.

\medskip
\noindent
{\bf Property 3: (Lyapunov Sequences).}\\[1mm] 
Lyapunov sequences quantify the simplification of the shape. In the following, we specify generic sequences that apply to all skeletonisation scale-spaces. In Section~\ref{sec:applications}, we show that for specific tasks, we can find meaningful specialised Lyapunov sequences by imposing additional requirements.

\begin{proposition}[Decreasing Object Area] 
	\label{prop:area}
	\noindent The area $a_\ell := |O_\ell|$ of the object decreases as the scale parameter $\ell$ increases, i.e. $a_\ell \leq a_k$ for $k > \ell$.
\end{proposition}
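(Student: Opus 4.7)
The plan is to reduce the area monotonicity to a set inclusion chain $O_k \subseteq O_\ell$ for $k > \ell$, which will itself follow from the skeleton inclusion $\SK_k \subseteq \SK_\ell$. Since $|\cdot|$ denotes area (Lebesgue measure in the continuous case, counting measure for pixels in the discrete case), monotonicity of the measure then gives the claim.

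First I would derive the skeleton monotonicity directly from Definition~\ref{def:skelspace}. Since $\SK_j = \SK_0 \setminus \bigcup_{i=1}^{j} P_i$ and the sets $P_i$ are pairwise disjoint subsets of $\SK_0$, the removed portion $\bigcup_{i=1}^{j} P_i$ is non-decreasing in $j$. Hence for $k > \ell$ we obtain
\begin{equation*}
\SK_k \;=\; \SK_0 \setminus \bigcup_{i=1}^{k} P_i \;\subseteq\; \SK_0 \setminus \bigcup_{i=1}^{\ell} P_i \;=\; \SK_\ell.
\end{equation*}

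Next I would push this inclusion through the object transform $\RO$. The essential observation is that in Definition~\ref{def:transforms} the radius $D(\bm x)$ refers to the distance map of the \emph{original} object $O$ and therefore does not change as $\ell$ varies; the inscribed disc assigned to each retained skeleton point is the same at every scale. Consequently
\begin{equation*}
O_k \;=\; \bigcup_{\bm x \in \SK_k} B_{D(\bm x)}(\bm x) \;\subseteq\; \bigcup_{\bm x \in \SK_\ell} B_{D(\bm x)}(\bm x) \;=\; O_\ell,
\end{equation*}
and monotonicity of the area functional yields the asserted inequality between $a_k$ and $a_\ell$.

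I do not expect a serious obstacle: the argument is a two-line consequence of monotonicity of set unions. The only subtlety worth stating explicitly is the interpretation of $D$ as the fixed distance map of the initial object, rather than a distance map recomputed from $O_\ell$ at each scale; without this clarification one might mistakenly worry that the discs shrink together with the skeleton and invalidate the inclusion. In the fully discrete setting the same argument transfers verbatim: discrete discs replace continuous ones and cardinality replaces Lebesgue measure, while the set-theoretic monotonicity used above is unaffected by the discretisation.
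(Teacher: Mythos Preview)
Your proof is correct and follows essentially the same route as the paper: both establish the set inclusion $\RO(\SK_{k}) \subseteq \RO(\SK_{\ell})$ for $k>\ell$ by skeleton monotonicity and then invoke monotonicity of the area measure. The paper argues one step at a time ($\ell\to\ell+1$) while you handle general $k>\ell$ directly, and your explicit remark that $D$ is the fixed distance map of the original object is a useful clarification the paper leaves implicit.
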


\begin{proof}
	For the path $P = (P_1,\dots,P_m)$ we have $\SK_\ell = \SK_{\ell+1} \cup P_\ell$ and thus
	\begin{equation*}
		\quad \bigcup_{\bm x \in \SK_\ell} B_{D(\bm x)}(\bm x)\quad = \quad \bigcup_{\bm x \in \SK_{\ell+1}} B_{D(\bm x)}(\bm x) \:\cup\: \bigcup_{\bm x \in P_\ell} B_{D(\bm x)}(\bm x) \, .
	\end{equation*}
	Therefore, we can derive
	\begin{equation*}
		\RO(\SK_\ell) = \quad \bigcup_{\bm x \in \SK_\ell} B_{D(\bm x)}(\bm x) \:\supseteq \bigcup_{\bm x \in \SK_{\ell+1}} B_{D(\bm x)}(\bm x) \quad = \RO(\SK_{\ell+1}) \, .
	\end{equation*}
	Thus, we can conclude $a_\ell = | \RO(\SK_\ell) |  \leq  | \RO(\SK_{\ell+1}) | = a_{\ell +1}$.  \qed
\end{proof}

Similarly, we can also consider the spatial dimensions of the object. In the following, the diameter $\dia(S)$ of a closed set $S \subset \Omega$ is defined as
\begin{equation*}
	\dia(S) := \max\{ d(\bm x, \bm y) \vertL \bm x, \bm y \in S \},
\end{equation*}
where $d(i,j)$ is the continuous or discrete distance function. Then, the diameter $\dia(\RO(\SK_\ell))$ of the reconstructed object defines a Lyapunov sequence.

\begin{proposition}[Decreasing Object Diameter] 
	The object diameter decreases with increasing scale $\ell$, i.e. for $\ell \geq 0$, we have
	\begin{equation}
		\dia(\RO(\SK_\ell)) \geq \dia(\RO(\SK_{\ell+1})).
	\end{equation}
\end{proposition}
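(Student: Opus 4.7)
The plan is to leverage the set inclusion that was already established in the proof of Proposition~\ref{prop:area}, since diameter is a monotone set function. Concretely, I would first recall that for the sparsification path decomposition $\SK_\ell = \SK_{\ell+1} \cup P_{\ell+1}$, the same reasoning as in Proposition~\ref{prop:area} yields
\begin{equation*}
\RO(\SK_{\ell+1}) \;=\; \bigcup_{\bm x \in \SK_{\ell+1}} B_{D(\bm x)}(\bm x) \;\subseteq\; \bigcup_{\bm x \in \SK_{\ell}} B_{D(\bm x)}(\bm x) \;=\; \RO(\SK_{\ell}).
\end{equation*}
No additional geometric argument about the skeleton structure is needed beyond this inclusion.

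Next I would invoke the monotonicity of the diameter with respect to set inclusion. For any closed sets $S \subseteq T \subset \Omega$, every pair $\bm x, \bm y \in S$ lies in $T$, so $d(\bm x, \bm y) \leq \dia(T)$; taking the maximum over such pairs gives $\dia(S) \leq \dia(T)$. Applying this to $S = \RO(\SK_{\ell+1})$ and $T = \RO(\SK_\ell)$ yields the claim $\dia(\RO(\SK_\ell)) \geq \dia(\RO(\SK_{\ell+1}))$. Both sets are compact (finite or indexed unions of closed inscribed discs inside the compact object $O$), so the maximum in the definition of $\dia$ is attained and the argument is rigorous in the continuous case; the discrete case is immediate since the sets are finite.

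There is essentially no obstacle here: the result is a short corollary of Proposition~\ref{prop:area}'s key inclusion combined with the fact that $\dia(\cdot)$ is order-preserving under $\subseteq$. The only point worth mentioning explicitly in the write-up is the monotonicity of $\dia$, since it is the sole ingredient not already encapsulated in the previous proposition.
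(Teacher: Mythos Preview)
Your proposal is correct and follows essentially the same approach as the paper: both arguments rest on the inclusion $\RO(\SK_{\ell+1}) \subseteq \RO(\SK_{\ell})$ established in Proposition~\ref{prop:area} together with the monotonicity of $\dia(\cdot)$ under set inclusion. The only cosmetic difference is that the paper packages the monotonicity step as a short proof by contradiction, whereas you argue it directly; the underlying content is identical.
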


\begin{proof}
	 Assume that $\dia(\RO(\SK_\ell)) < \dia(\RO(\SK_{\ell+1}))$.  Then, there are $\bm x, \bm y \in \RO(\SK_{\ell+1})$ with $d(\bm x, \bm y) > \dia(\RO(\SK_\ell))$. However, from the proof for  Proposition~\ref{prop:area}, we know  $\RO(\SK_{\ell+1}) \subseteq \RO(\SK_{\ell})$ and hence also $\bm x, \bm y \in \RO(\SK_{\ell})$. This is a contradiction to the definition of the diameter and our assumption. Thus, the proposition holds. \qed
\end{proof}

\noindent
{\bf Property 4: (Invariance).}\\[1mm] 
At each scale $\ell$, the skeleton $\SK_\ell$ uniquely defines the skeleton-image-pair. Therefore, the scale-space inherits all invariances that apply to skeletons. This includes translations, rotations, scaling, and mirroring. For each such transformation $\mathcal{T}$ we have $\ST(\mathcal{T}(f)) = \mathcal{T}(\ST(f))$. 
In the discrete setting this holds only for on-grid transformations that preserve the shape. This includes on-grid translations, $90^\circ$ rotations, and mirroring along image axis.

\medskip
\noindent
{\bf Property 5: (Empty Image as Steady State).}\\[1mm]
The image for the coarsest scale is defined by $\SK_{m} = \SK_0\setminus \bigcup_{i=1}^{m} P_i$. According to Definition~\ref{def:sparspath}, $P_1,\dots,P_m$ form a partition of $\SK$.  Furthermore, Definition~\ref{def:skelspace} implies $\SK_0 = \SK$. Thus, we have $\SK_m =  \SK \setminus \SK = \emptyset$. The corresponding reconstruction is thus an empty image.

\section{Task-Specific Skeletonisation Scale-Spaces}
\label{sec:applications}

So far, our generic scale-space framework only demands very few requirements. With the sparsification path from Definition~\ref{def:sparspath}, we can trade stricter requirements for application-specific scale-space properties. Just as sparsification scale-spaces~\cite{CPW19}, our framework allows \textit{uncommitted} strategies that prune the skeleton without dependency on the input shape. However, \textit{committed} approaches are often more interesting: They adapt to the evolving shape according to  desirable properties for practical use. To illustrate this principle, we provide two examples of task-specific skeletonisation scale-spaces. 
We also show results for an implementation that uses the maximal disc thinning algorithm MDT \cite{PB12} for skeleton computation. Therefore, we directly consider the discrete setting in the following. 

\subsection{Skeletonisation Scale-Spaces for Compression}
\label{sec:comp}

First, inspired by the work of Mühlhaus~\cite{Mü20}, we propose a MAT-based scale-space for shape compression. It reduces the amount of skeleton points to a very sparse subset while keeping the shape reconstruction as exact as possible. Since we are only interested in the scale-space aspect of such an application, we do not discuss actual coding here and refer to Mühlhaus~\cite{Mü20} instead. Our goal is to remove points in ascending order of their influence on the reconstruction. To this end, we define the reconstruction impact of a given set $S \subset \SK$ as the part of the object that can only be reconstructed by its skeleton points.

\begin{definition} [Reconstruction Impact]
	For a set of skeleton points $S \subseteq \SK_\ell$ and $\ell \in \{0,\dots,m\}$ we define the \textit{reconstruction impact} $I_{\ell,S}$ by
	\begin{equation}
		I_{\ell,S} := O_\ell \setminus \Bigl( \bigcup_{i \in \SK_\ell \setminus S} B_{D(i)}(i) \Bigr).
	\end{equation}
\end{definition}
\noindent
Removing $S$ from the skeleton implies removing $I_{\ell,S}$ from the object, i.e.
\begin{equation}
	\label{eq:removeimpact}
	O_{\ell+1} = O_\ell \setminus I_{\ell,S}
\end{equation}
Based on the reconstruction influence, Algorithm~\ref{alg:skelcomp} defines a skeleton compression path that minimises the reconstruction error. For the sake of a convenient notation, we define $I_{\ell,i} := I_{\ell,\{i\}}$ for a given pixel index $i \in \SK$. 
\begin{algorithm}[t]
	\caption{Skeleton Compression Path\label{alg:skelcomp}}
	$\ell \gets 0$, $r \in \NN$ user parameter (points to remove per step) \; 
	\While{$|\SK_\ell| > 0$}{
		$k \gets |\SK_\ell|$, $s \gets \min(r,|\SK_\ell|)$\;
		\tcc{Order all skeleton points by reconstruction influence.}
		Let $\{c_1,...,c_k\} = \SK_\ell$ with $I_{\ell,c_i} \leq I_{\ell,c_j}$ for $i \leq j$\;
		\tcc{Remove $s$ points with smallest impact on reconstruction.}
		$P_{\ell+1} \gets \{c_1,...,c_s\}$, $\SK_{\ell+1} \gets \SK_\ell \setminus P_{\ell+1}$, $\ell \gets \ell + 1$\;
	}	
\end{algorithm}
To demonstrate that this path indeed fulfils our compression goals, we consider the \textit{relative reconstruction error} $\mathcal{E}_\ell$. This ratio of missing object points and total object points is a
simplified scale-space version of the exactness of reconstruction criterion for skeleton quality~\cite{PB12}:
\begin{equation}
	\mathcal{E}_\ell := \frac{|O_0| - |O_\ell|}{|O_0|} \, .
\end{equation}
Note that this captures all possible reconstruction errors since removing skeleton points can only reduce the reconstructed area according to Proposition~\ref{prop:area}. Since $|O_\ell|$ is decreasing for coarser scales, the relative error increases and thus constitutes another Lyapunov sequence as a corollary to Proposition~\ref{prop:area}. Beyond that, we can also guarantee a minimal increase in error as desired for compression.

\begin{proposition}[Minimal Relative Error Increase]
Among all possible sparsification paths, for all $\ell \in \{0,...,m-1\}$, the increase in relative error $\mathcal{E}_{\ell+1}-e_\ell$ is minimal for a skeleton compression scale-space.
\end{proposition}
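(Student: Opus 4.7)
The plan is to reduce the per-step error-increase claim to a single-step impact minimisation, and then exploit the greedy structure of Algorithm~\ref{alg:skelcomp}. First, I would rewrite the increment of $\mathcal{E}$ from its definition:
\begin{equation*}
\mathcal{E}_{\ell+1} - \mathcal{E}_\ell \;=\; \frac{|O_0| - |O_{\ell+1}|}{|O_0|} - \frac{|O_0| - |O_\ell|}{|O_0|} \;=\; \frac{|O_\ell| - |O_{\ell+1}|}{|O_0|}.
\end{equation*}
Applying Equation~(\ref{eq:removeimpact}) with $S = P_{\ell+1}$ gives $O_{\ell+1} = O_\ell \setminus I_{\ell,P_{\ell+1}}$, and since $I_{\ell,P_{\ell+1}} \subseteq O_\ell$ holds by its definition, one obtains $|O_\ell| - |O_{\ell+1}| = |I_{\ell,P_{\ell+1}}|$. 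Consequently $\mathcal{E}_{\ell+1} - \mathcal{E}_\ell = |I_{\ell,P_{\ell+1}}|/|O_0|$, and because $|O_0|$ is fixed by the input, minimising the error increase is equivalent to minimising $|I_{\ell,P_{\ell+1}}|$ over admissible $P_{\ell+1} \subseteq \SK_\ell$ of the appropriate size $s = \min(r, |\SK_\ell|)$.

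Second, I would compare the greedy choice of Algorithm~\ref{alg:skelcomp} to any alternative $Q \subseteq \SK_\ell$ with $|Q| = s$. By construction the algorithm selects the $s$ points $c_1, \ldots, c_s$ with the smallest individual impacts $|I_{\ell, c_i}|$, so for every $j \in \SK_\ell \setminus P_{\ell+1}$ and every $c_i \in P_{\ell+1}$ we have $|I_{\ell, j}| \geq |I_{\ell, c_i}|$. An exchange argument then swaps the points of $Q \setminus P_{\ell+1}$ for points of $P_{\ell+1} \setminus Q$ one at a time and concludes that $|I_{\ell, Q}| \geq |I_{\ell, P_{\ell+1}}|$; combined with the first step this yields the claim.

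The main obstacle is justifying the exchange step, because $|I_{\ell, S}|$ is not in general additive in the individual impacts $|I_{\ell, i}|$: an object pixel that is covered by maximal discs of several removed skeleton points contributes to the joint impact only once all of them are removed, so joint impacts can strictly exceed sums of individual impacts. A clean exchange therefore needs an additional structural assumption, such as pairwise disjointness of the exclusive coverage regions $I_{\ell, i}$ of the skeleton points currently in $\SK_\ell$, under which $|I_{\ell, S}| = \sum_{i \in S}|I_{\ell, i}|$ and the greedy selection is optimal. Alternatively, I would state the claim for the unit step size $r = 1$, where $P_{\ell+1}$ is a singleton and greedy trivially minimises $|I_{\ell, P_{\ell+1}}|$ over all single-point removals. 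Either way, combining this with the reduction in the first paragraph closes the proof.
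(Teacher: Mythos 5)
Your first step coincides exactly with the paper's own argument: the paper likewise invokes Eq.~\eqref{eq:removeimpact} to write the error increase as $\mathcal{E}_{\ell+1}-\mathcal{E}_\ell = |I_{\ell,P_{\ell+1}}|/|O_0|$ and thereby reduces the claim to minimising the joint impact $|I_{\ell,P_{\ell+1}}|$. Where you diverge is the second step, and there you are more careful than the paper. The paper disposes of it in one sentence: ``By definition, Algorithm~\ref{alg:skelcomp} selects $P_{\ell+1}$ such that $I_{P_{\ell+1}}$ is minimised over all possible choices of $P_{\ell+1}$.'' That assertion is precisely what you decline to take on faith, and your scepticism is justified. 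The algorithm sorts by \emph{individual} impacts $|I_{\ell,c_i}|$ and removes the $s$ smallest; since the joint impact is superadditive (an object pixel covered only by the discs of two removed points contributes to $|I_{\ell,\{i,j\}}|$ but to neither $|I_{\ell,i}|$ nor $|I_{\ell,j}|$), selecting points of minimal individual impact need not minimise the joint impact once $s \geq 2$. Concretely: if $B_{D(a)}(a)$ and $B_{D(b)}(b)$ overlap in a large region covered by no other disc, then $|I_{\ell,a}|$ and $|I_{\ell,b}|$ can both be tiny while $|I_{\ell,\{a,b\}}|$ is large, and pairing $a$ with some third point $c$ of slightly larger individual impact would be strictly better. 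So the exchange argument genuinely fails in general, exactly as you observe.

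In short, the ``obstacle'' you name is not a gap in your understanding but a gap in the paper's proof, which silently equates minimising individual impacts with minimising the joint impact. Your two repairs are the natural ones: restrict to $r=1$, where singleton joint and individual impacts coincide and the greedy choice is trivially optimal among single-point removals, or impose an additivity/disjointness assumption on the exclusive coverage regions under which $|I_{\ell,S}| = \sum_{i \in S}|I_{\ell,i}|$ and the exchange goes through. One further caveat applies equally to your write-up and to the paper: the quantification ``among all possible sparsification paths'' is only meaningful per step, comparing sets of equal cardinality removed from the same current skeleton $\SK_\ell$; a path that removes fewer points per step can of course achieve a smaller per-step error increase, so the proposition should really be read as a statement about greedy optimality at fixed step size.
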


This holds directly since $|O_{\ell+1}|=|O_{\ell}|-|I_{\ell,P_{\ell+1}}|$ according to Eq.~\eqref{eq:removeimpact}. 
By definition, Algorithm~\ref{alg:skelcomp} selects $P_{\ell+1}$ such that $I_{P_{\ell+1}}$ is minimised over all possible choices of $P_{\ell+1}$. Thus, a skeleton compression scale-space guarantees that at each scale we have a sparse skeleton with the smallest possible relative error.

Another quality measure for the discrete medial axis is skeleton minimality \cite{PB12}. It relates the number of object points to the number of skeleton points needed to represent the object by
\begin{equation}
	\mathcal{M}_\ell :=  \frac{|\SK_\ell|}{|O_\ell|} \, .
\end{equation}
For compression purposes, skeleton minimality should be as small as possible, which implies that only a small amount of skeleton points needs to be stored to reconstruct a large object area. For our skeleton compression scale-space, this is reflected by a corresponding Lyapunov sequence.

\begin{proposition}[Skeleton Minimality is a Lyapunov Sequence]
	Skeleton minimality decreases with increasing scale $\ell$, i.e. $\mathcal{M}_\ell \geq \mathcal{M}_{\ell+1}$.
\end{proposition}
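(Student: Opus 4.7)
The plan is to convert the claim $\mathcal{M}_\ell \geq \mathcal{M}_{\ell+1}$ into an equivalent inequality about the amounts of skeleton and object removed in one step, and then to exploit two structural facts: the pairwise disjointness of single-point impacts, and the greedy rule used by Algorithm~\ref{alg:skelcomp}.

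First, I would use $|\SK_{\ell+1}| = |\SK_\ell| - |P_{\ell+1}|$ together with $|O_{\ell+1}| = |O_\ell| - |I_{\ell, P_{\ell+1}}|$ from Eq.~\eqref{eq:removeimpact}. Cross-multiplying the target inequality $|\SK_\ell|/|O_\ell| \geq |\SK_{\ell+1}|/|O_{\ell+1}|$ (noting both denominators are positive while the scale-space is non-empty) reduces the claim to the equivalent form
\begin{equation*}
|I_{\ell, P_{\ell+1}}| \cdot |\SK_\ell| \;\leq\; |P_{\ell+1}| \cdot |O_\ell|,
\end{equation*}
i.e. the fraction of object area lost is bounded by the fraction of skeleton points removed. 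Next, I would observe that for distinct $i,j \in \SK_\ell$ the sets $I_{\ell,i}$ and $I_{\ell,j}$ are disjoint: a pixel in $I_{\ell,i}$ is, by definition of the reconstruction impact, uniquely covered by $B_{D(i)}(i)$ among all discs indexed by $\SK_\ell$, hence cannot also lie in $I_{\ell,j}$. Since each $I_{\ell,i} \subseteq O_\ell$, this forces $\sum_{i \in \SK_\ell} |I_{\ell,i}| \leq |O_\ell|$, so by pigeonhole the smallest single-point impact satisfies $|I_{\ell, c_1}| \leq |O_\ell|/|\SK_\ell|$. Because Algorithm~\ref{alg:skelcomp} picks $P_{\ell+1}$ as the $s = |P_{\ell+1}|$ skeleton points with the smallest individual impacts, the same per-point bound propagates to every element of $P_{\ell+1}$.

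The main obstacle is passing from per-point impacts to the combined impact $|I_{\ell, P_{\ell+1}}|$, because it may exceed $\sum_{i \in P_{\ell+1}} |I_{\ell,i}|$: pixels covered jointly by several discs in $P_{\ell+1}$ but by none outside it contribute to the combined impact without appearing in any individual one. For the atomic case $|P_{\ell+1}| = 1$ the key inequality is immediate from the pigeonhole bound above. For larger batches I would invoke Property 2 (causality) to decompose the step from scale $\ell$ to scale $\ell+1$ into $|P_{\ell+1}|$ single-point substeps and apply the atomic argument inductively, using at each substep that the currently smallest individual impact does not exceed the current average $|O|/|\SK|$. The delicate point is to ensure at each substep that the removed point indeed realises this minimum in the intermediate state, which is the step I would scrutinise most carefully.
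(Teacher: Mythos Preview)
Your reduction to the inequality $|I_{\ell,P_{\ell+1}}|\cdot|\SK_\ell|\leq |P_{\ell+1}|\cdot|O_\ell|$ and your use of the disjointness of the single-point impacts $I_{\ell,i}$ are essentially the same engine the paper uses. The paper packages the algebra via the mediant form
\[
\mathcal{M}_\ell=\frac{|\SK_{\ell+1}|+|P_{\ell+1}|}{|O_{\ell+1}|+|I_{\ell,P_{\ell+1}}|}
\quad\text{and}\quad
\frac{|I_{\ell,P_{\ell+1}}|}{|P_{\ell+1}|}\leq\frac{|O_{\ell+1}|}{|\SK_{\ell+1}|},
\]
whereas you cross-multiply directly; the two are equivalent. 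For the atomic case $|P_{\ell+1}|=1$ your argument is complete and matches the paper's.

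Where your proposal and the paper diverge is in how the batch case $|P_{\ell+1}|>1$ is handled. The paper asserts $\tfrac{|I_{\ell,P_{\ell+1}}|}{|P_{\ell+1}|}\leq\tfrac{|I_{\ell,\SK_{\ell+1}}|}{|\SK_{\ell+1}|}$ directly ``due to the sorting in Algorithm~\ref{alg:skelcomp}'', treating the set impacts as if they were the sums of the individual impacts. You instead try to reach the batch case by chaining $|P_{\ell+1}|$ atomic substeps via causality, and you correctly flag the delicate point: the $j$-th point in the batch was chosen because $|I_{\ell,c_j}|$ is small at scale~$\ell$, but after removing $c_1,\ldots,c_{j-1}$ its impact may have grown (pixels jointly covered only by $c_1,\ldots,c_j$ now count against $c_j$), so it need not realise the minimum in the intermediate state. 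This is precisely the mechanism by which $|I_{\ell,P_{\ell+1}}|$ can exceed $\sum_{i\in P_{\ell+1}}|I_{\ell,i}|$, and it is the same gap that the paper's one-line sorting justification leaves open. In short: your approach is the paper's approach with a more careful accounting, and the obstacle you isolate for $r>1$ is genuine and not resolved by the paper's argument either.
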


\begin{proof}
First, we decompose $\mathcal{M}_\ell$ according to the sparsification path from Definition~\ref{def:skelspace} and the reconstruction impact from Eq.~\eqref{eq:removeimpact}. This yields
\begin{equation}
	\label{eq:ml}
	\mathcal{M}_\ell = \frac{|\SK_\ell|}{|O_\ell|} = \frac{|\SK_{\ell+1}|+|P_{\ell+1}|}{|O_{\ell+1}|+|I_{\ell, P_{\ell+1}|}}
\end{equation}	
Furthermore, due to the sorting in Algorithm~\ref{alg:skelcomp}, the average unique area of the removed points in $P_{\ell+1}$ is smaller or equal to the average unique area of the remaining skeleton points in $\SK_{\ell+1}$, i.e.
\begin{equation}
	\label{eq:avgarea}
\frac{|I_{\ell, P_{\ell+1}}|}{|P_{\ell+1}|} \leq \frac{|I_{\ell, \SK_{\ell+1}}|}{|\SK_{\ell+1}|} \leq \frac{|O_{\ell+1}|}{|\SK_{\ell+1}|} \, .
\end{equation}
Combining both Eq.~\eqref{eq:ml} and  Eq.~\eqref{eq:avgarea}, we can show our claim by
\begin{alignat}{2}
	&|O_{\ell+1}| \geq \frac{|I_{\ell, P_{\ell+1}}|\cdot |\SK_{\ell+1}|}{|P_{\ell+1}|} &\iff\,&
	|O_{\ell+1}| \cdot |P_{\ell+1}| \geq |I_{\ell, P_{\ell+1}}| \cdot |\SK_{\ell+1}| \\
	\overset{\textnormal{\eqref{eq:ml}}}{\iff}\,& \frac{|\SK_{\ell+1}|+|P_{\ell+1}|}{|O_{\ell+1}|+|I_{\ell, P_{\ell+1}|}} \leq \frac{|\SK_{\ell+1}|}{|O_{\ell+1}|} &\iff\,& \mathcal{M}_\ell \geq \mathcal{M}_{\ell+1} \, .
\end{alignat}
	\end{proof}
	
In Fig.~\ref{fig}, we compare our committed compression scale-space to an uncommitted scale-space where the sparsification path is chosen in a uniformly random order. While the evolution behaves similarly at fine scales, the reconstruction error is consistently lower for the compression path. At coarse scales, the compression path even yields $\approx 60\%$ error reduction and is thus indeed task-adaptive.

\subsection{Skeletonisation Scale-Spaces for Branch Pruning}
\label{sec:shaperec}

Pruning~\cite{BLL07,ML12,Og94,SB98,SS16,TH02} has different goals than compression. It is often performed as a preprocessing step for applications such as shape recognition and matching. Here, the connectivity and structure of the skeleton are essential while the amount of skeleton points and thus the coding cost is irrelevant. 

Therefore, we design the sparsification path to preserve skeleton structure in two ways. First, our pruning has to be \textit{homotopy preserving}, i.e. it should not disconnect the medial axis. Furthermore, we take into account semantically important end- and branching points of the skeleton. We can quantify their importance by using \textit{skeleton complexity}~\cite{PB12} as a third quality measure.

To this end, we require two neighbourhood definitions. Let a 4-neighbourhood contain left, right, upper, and lower neighbours, while an 8-neighbourhood additionally contains the diagonal neighbours.

\begin{figure}[p]
	\small
	\tabcolsep4pt
	\begin{center}
		\begin{tabular}{ccc}
			original & segmentation & skeleton \\
			\includegraphics[width=0.3\textwidth]{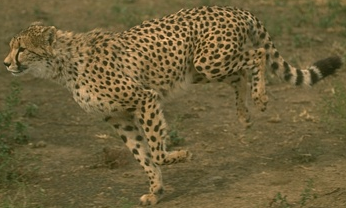} &
			\includegraphics[width=0.3\textwidth]{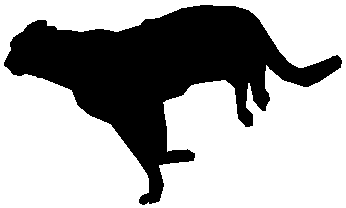} &
			\includegraphics[width=0.3\textwidth]{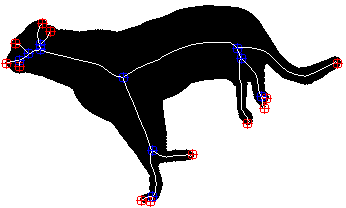} \\
			\hline
			\hline
			\textbf{random} & \textbf{compression} & \textbf{branch pruning}\\
			\includegraphics[width=0.3\textwidth]{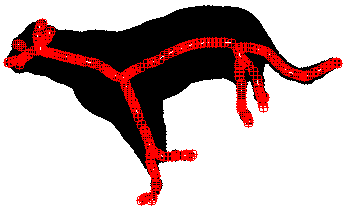} &
			\includegraphics[width=0.3\textwidth]{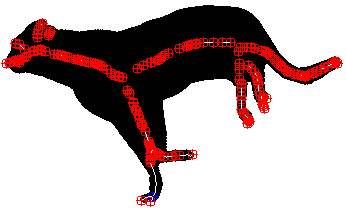} &
			\includegraphics[width=0.3\textwidth]{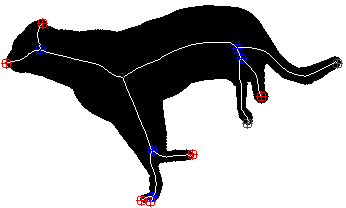} \\
			SKEL 320, ERR 1397 &  SKEL 320, ERR 1097 & SKEL 647, ERR 1174 \\
			\includegraphics[width=0.3\textwidth]{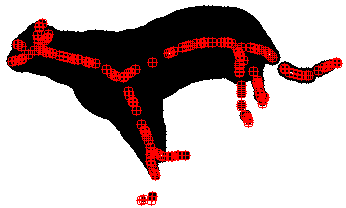} &
			\includegraphics[width=0.3\textwidth]{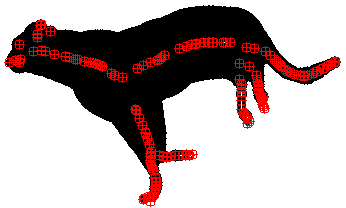} &
			\includegraphics[width=0.3\textwidth]{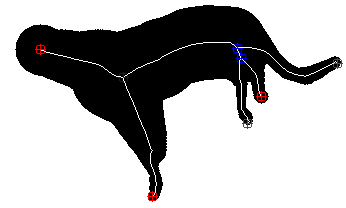} \\
			SKEL 160, ERR 2051 & SKEL 160, ERR 1225 & SKEL 530, ERR 1931 \\
			\includegraphics[width=0.3\textwidth]{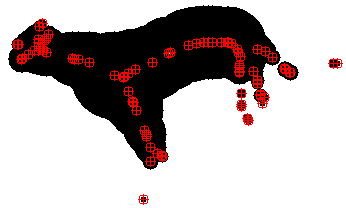} &
			\includegraphics[width=0.3\textwidth]{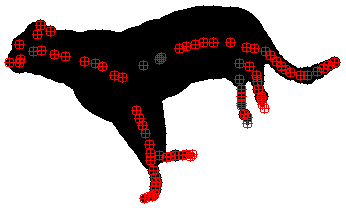} &
			\includegraphics[width=0.3\textwidth]{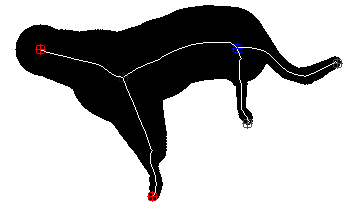} \\
			SKEL 80, ERR 3347 & SKEL 80, ERR 1440 & SKEL 492, ERR 2302 \\
			\includegraphics[width=0.3\textwidth]{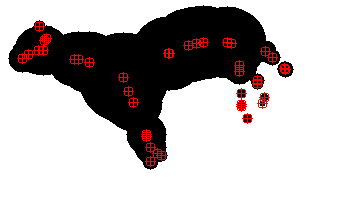} &
			\includegraphics[width=0.3\textwidth]{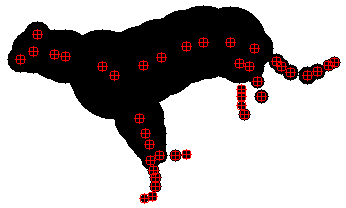} &
			\includegraphics[width=0.3\textwidth]{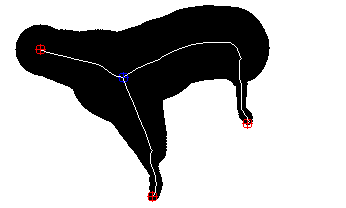} \\
			SKEL 40, ERR 4378 & SKEL 40, ERR 2147 & SKEL 392, ERR 3603 \\
			\includegraphics[width=0.3\textwidth]{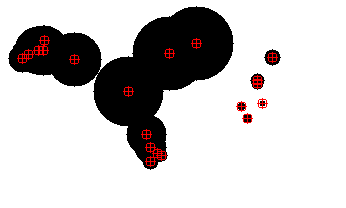} &
			\includegraphics[width=0.3\textwidth]{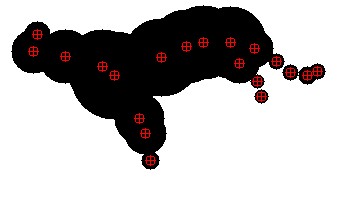} &
			\includegraphics[width=0.3\textwidth]{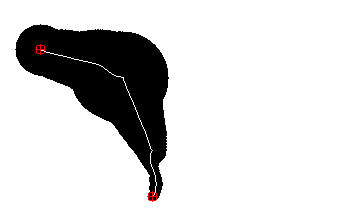} \\
			SKEL 20, ERR 8918 & SKEL 20, ERR 3803 & SKEL 203, ERR 12156\\
		\end{tabular}
	\end{center}
	\caption{\textbf{Comparison of sparsification paths} on image 134008 of BSDS500 \cite{AMFM11}. Endpoints are marked in red, branch points in blue. \textbf{SKEL} denotes the number of skeleton points and \textbf{ERR} the number of wrongly reconstructed object pixels. Random and compression scale-spaces disconnect the skeleton, while branch pruning preserves homotopy. The compression path yields consistently smaller reconstruction error than random removal.\label{fig}}
\end{figure}

\begin{definition}[Endpoints, Branching Points, Simple Points, and Arcs]\label{def:points}
	\vspace{-1.75em}
	\begin{itemize}
		\item An \textbf{endpoint} of $\SK$ has either: 1.) zero or one skeleton points in their 8-neigh\-bour\-hood; 2.) two skeleton points in its 8-neighbourhood which are adjacent to one another; 3.) three skeleton points in its 8-neighbourhood that are either all above, below, left, or right of $i$. We denote the set of endpoints by $E(\SK)$.
		\item A \textbf{branching point} of $\SK$ has at least three skeleton points in its 8-neigh\-bour\-hood which are not horizontally or vertically adjacent. We denote the set of branching points by $B(\SK)$.
		\item A \textbf{simple point} is a skeleton point which is neither an endpoint nor a branching point. We write $S(\SK) := \SK \setminus (E(\SK) \cup B(\SK))$.
		\item We define $\mathcal{A}(\SK)$ as the set of all arcs/branches. Each of its elements $A = \{a_1,...,a_k\} \subset \SK$ fulfils two properties: 1.) The arc is \textit{connected}, i.e. for each $i$ the points $a_i$ and $a_{i+1}$ share an 8-neighbourhood. 2.) Exactly $a_1$ and $a_k$ are \textit{end}- or \textit{branching points}, the rest of the $a_i$ are simple. 
		\end{itemize}
\end{definition}

Endpoints and branching points are much more important for shape recognition than simple points. They mark where individual arcs of the skeleton intersect or end. Thereby, they characterise the shape together with the arc lengths. Therefore, \textit{skeleton complexity} considers the number of important points as
\begin{equation}
	\mathcal{C}_\ell := |E(\SK_\ell)| + |B(\SK_\ell)|\, .
\end{equation}
Spurious branches resulting from noise increase this measure significantly due to added end- and branching points. Note that typically, such a change has marginal impact on the relative error, while skeleton minimality rises based on the added arc length.
With these definitions and the combined goals of preserving homotopy while not increasing skeleton complexity, we define the branch pruning path in Algorithm~\ref{alg:skelshape}. The simple rule of only removing full skeleton branches with endpoints achieves our goals. A skeleton can only be disconnected if branching or simple points are removed. We can show that complexity is decreasing for coarser scales and thus constitutes a Lyapunov sequence for the pruning scale-space.

\begin{algorithm}[t]
	\caption{Branch Pruning Path\label{alg:skelshape}}
	$\ell \gets 0$\;
	\While{$|\SK_\ell| > 0$}{
		\tcc{Consider branches with endpoints or full skeleton.}
		$A_\mathcal{E} \gets \{ A \subset \mathcal{A}(\SK_\ell)\, | \, E(\SK_\ell) \cap A \neq \emptyset \}$, \lIf{$A_\mathcal{E} = \emptyset$}{$A_\mathcal{E} \gets \{\SK_\ell\}$}
		\tcc{Select branch with smallest reconstruction impact.}
		$A_{\min} \gets \textnormal{argmin}_{A \in \mathcal{A}_E} I_{\ell, A}$ \;
		\tcc{Remove whole branch excluding branching points.}
		$P_{\ell+1}  \gets \{a_i \in A_{\min} \, | \, a_i \notin B(\SK_\ell) \}$,
		$\SK_{\ell+1} \gets \SK_\ell \setminus P_{\ell+1}$, 
		$\ell \gets \ell + 1$\;
	}	
\end{algorithm}

\begin{proposition}[Skeleton Complexity is a Lyapunov Sequence]
	Skeleton complexity decreases with increasing scale $\ell$, i.e. $\mathcal{C}_\ell \geq \mathcal{C}_{\ell+1}$.
\end{proposition}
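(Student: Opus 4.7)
The plan is to track, point by point, how $\mathcal{C}_\ell=|E(\SK_\ell)|+|B(\SK_\ell)|$ changes when the branch $A_{\min}$ is removed. Every point of $\SK_\ell$ falls into one of three categories: removed (the simple and endpoint elements of $A_{\min}$), retained outside $A_{\min}$ (whose type is unaffected by the removal), and retained inside $A_{\min}$ (the branching points of $\SK_\ell$ that lie in $A_{\min}$, whose type may change in $\SK_{\ell+1}$). My goal is to show that the first category strictly reduces the endpoint count while the third can only migrate \emph{downward} through the type hierarchy branching $\Rightarrow$ endpoint $\Rightarrow$ simple, so $|E|+|B|$ cannot increase.

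First I would dispose of the generic case $A_{\mathcal{E}}\neq\emptyset$. Set $e_A=|A_{\min}\cap E(\SK_\ell)|$; by the definition of $A_{\mathcal{E}}$ we have $e_A\geq 1$. Since $P_{\ell+1}=A_{\min}\setminus B(\SK_\ell)$ deletes every endpoint and simple element of $A_{\min}$, we immediately lose $e_A$ endpoints. For a branching point $b$ of $\SK_\ell$ that lies in $A_{\min}$ (so $b=a_1$ or $b=a_k$), removing its unique arc neighbour only decreases the count of skeleton points in its 8-neighbourhood. Hence $b$ can only transition from branching to either branching (unchanged), endpoint, or simple. The pair $(|E|,|B|)$ then changes at $b$ by $(0,0)$, $(+1,-1)$, or $(0,-1)$ respectively, each contributing at most $0$ to $\mathcal{C}_{\ell+1}-\mathcal{C}_\ell$. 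Combined with the $-e_A$ from the removed endpoints, this yields $\mathcal{C}_{\ell+1}-\mathcal{C}_\ell\leq -e_A\leq -1$.

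Next I would handle the degenerate case $A_{\mathcal{E}}=\emptyset$. Here $E(\SK_\ell)=\emptyset$, so $\mathcal{C}_\ell=|B(\SK_\ell)|$, and $P_{\ell+1}=\SK_\ell\setminus B(\SK_\ell)$ deletes every simple point. Thus $\SK_{\ell+1}=B(\SK_\ell)$, and by the trivial bound $|E(S)|+|B(S)|\leq|S|$ for any point set $S$, we obtain $\mathcal{C}_{\ell+1}\leq|\SK_{\ell+1}|=|B(\SK_\ell)|=\mathcal{C}_\ell$.

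The step I expect to be the main obstacle is justifying that classifications \emph{outside} $A_{\min}$ are in fact stable under the removal, so that the local tally above really captures all of $\mathcal{C}_{\ell+1}-\mathcal{C}_\ell$. This rests on the arc definition: interior simple points $a_i\in A_{\min}$ are 8-adjacent only to $a_{i-1}$ and $a_{i+1}$, so deleting them does not perturb the 8-neighbourhood of any skeleton point outside the arc. Similarly, an endpoint of $A_{\min}$ that is already an endpoint of $\SK_\ell$ has its whole neighbourhood, as described by Definition~\ref{def:points}, contained inside the arc. Making this neighbourhood-locality argument watertight against all three clauses of the endpoint definition, and simultaneously ruling out that a retained branching point in $A_{\min}$ ever \emph{upgrades} to a more complex type, is the delicate combinatorial checking on which the rest of the proof depends.
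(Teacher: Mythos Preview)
Your argument is essentially the paper's: both do a case split on whether an endpoint-containing arc exists and then track how removing the selected arc shifts the endpoint and branching-point counts (your generic case simply unifies the paper's Cases~1 and~2 via the single bound $e_A\geq 1$, and your degenerate-case bound $\mathcal{C}_{\ell+1}\leq|\SK_{\ell+1}|=|B(\SK_\ell)|$ is a slightly more cautious variant of the paper's Case~3, which just asserts $P_{\ell+1}=\SK_\ell$ and $\mathcal{C}_{\ell+1}=0$). The locality concern you flag---that deletions inside the arc do not alter the type of skeleton points outside it---is legitimate, but the paper's proof takes this for granted as well, so you are already at the same level of rigour.
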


	\begin{proof} For a scale $\ell \in \{0,...,m-1\}$, there are the following possible cases for the next step $P_{\ell+1}$ in the sparsification path:
		\begin{description}
			\item[Case 1:] $P_{\ell+1}$ has exactly two endpoints, which are both removed. The branch is not connected to other skeleton components and thus cannot influence other points. Then $\mathcal{C}_{\ell+1} = \mathcal{C}_{\ell}-2 < \mathcal{C}_{\ell}$.
			\item[Case 2:] By Algorithm~\ref{alg:skelshape}, $P_{\ell+1}$ has exactly one endpoint, which is removed. The arc is connected to a branching point. Its role can change to an end or simple point, but this cannot increase complexity. Thus, $\mathcal{C}_{\ell+1} \leq \mathcal{C}_{\ell}-1 < \mathcal{C}_{\ell}$.
			\item[Case 3:] $P_{\ell+1} = \SK_\ell$. The remaining skeleton is removed, thus $\mathcal{C}_{\ell+1} = 0 < \mathcal{C}_{\ell}$. \qed
		\end{description} 
	\end{proof}
	
In Fig.~\ref{fig},  branch pruning differs significantly from uncommitted random and compression scale-spaces. Note that visualising a meaningful evolution requires different scales than the other paths. Throughout the evolution, the branch pruning gradually reduces complexity: The joint number of end- and branching points decreases as desired. Homotopy is also preserved while the reconstruction error is kept as low as possible under these constraints. However, preserving the structure also comes at a cost. With an order of magnitude less points, the compression path yields only a quarter of the reconstruction error. On the flips-side, it sacrifices homotopy and creates many new end points.

\section{Conclusions and Outlook}
\label{sec:conclusion}
 
With skeletonisation scale-spaces, we provide a new hierarchical perspective on shape analysis. They combine the principles of the medial axis transform \cite{Bl67} with sparsification scale-space paradigms \cite{CPW19,Pe21}. Our scale-space framework allows adaptions to different applications via committed sparsification paths. In particular, our compression and branch pruning examples demonstrate that suitable requirements result in theoretical guarantees for reconstruction quality or skeleton structure in terms of Lyapunov sequences.

In the future, we plan to investigate connections between the skeleton evolution and the implied boundary evolution. Moreover, extending our theory to the fully continuous setting could yield new insights. Finally, we would like to explore full-fledged compression or shape matching approaches based on scale-spaces.

\medskip
%
%
%
\bibliographystyle{splncs04}
\bibliography{../../bibtex/bib.bib}

\end{document}